\newtheorem{lemma}{Lemma}
\newtheorem{theorem}{Theorem}
\newtheorem{corollary}{Corollary}
\newtheorem{definition}{Definition}
\DeclareMathOperator{\Span}{span}
\DeclareMathOperator{\tr}{tr}
\begin{document}

\title{Two-dimensional local Hamiltonian problem with area laws is \QMA-complete\footnote{An earlier version of this paper appeared in \cite{self}.}}

\begin{CJK}{UTF8}{gbsn}

\author{Yichen Huang (黄溢辰)\thanks{yichuang@mit.edu} \thanks{Present address: Center for Theoretical Physics, Massachusetts Institute of Technology, Cambridge, Massachusetts 02139, USA.}}
\affil{Department of Physics, University of California, Berkeley, Berkeley, California 94720, USA}

\maketitle

\end{CJK}

\begin{abstract}

We show that the two-dimensional (2D) local Hamiltonian problem with the constraint that the ground state obeys area laws is \QMA-complete. We also prove similar results in 2D translation-invariant systems and for the 3D Heisenberg and Hubbard models with local magnetic fields. Consequently, unless $\MA=\QMA$, not all ground states of 2D local Hamiltonians with area laws have efficient classical representations that support efficient computation of local expectation values. In the future, even if area laws are proved for ground states of 2D gapped systems, the computational complexity of these systems remains unclear.

\end{abstract}

Keywords: area law, entanglement entropy, local Hamiltonian problem, quantum many-body physics, \QMA-complete

\section{Introduction}

Computing ground states of local Hamiltonians is a fundamental problem in condensed matter physics. Intuitively, this problem is likely intractable because the dimension of the Hilbert space of a quantum many-body system grows exponentially with the system size. In a pioneering work \cite{KSV02} (see Ref. \cite{AN02} for an exposition), Kitaev defined the complexity class \QMA~(Quantum Merlin-Arthur) as the quantum analog of \MA~(\MA~is the probabilistic analog of \NP) and proved that the $5$-local Hamiltonian problem is \QMA-complete. This was followed by a series of papers (see Ref. \cite{Boo14} for a survey of \QMA-complete problems):
\begin{itemize}
\item The $3$-local Hamiltonian problem is \QMA-complete \cite{KR03}.
\item The $2$-local Hamiltonian problem is \QMA-complete \cite{KKR06}.
\item The 2D local Hamiltonian problem is \QMA-complete \cite{OT08}.
\item The 1D local Hamiltonian problem is \QMA-complete \cite{AGIK07, AGIK09, HNN13}.
\item The 1D translation-invariant local Hamiltonian problem is $\QMA_\EXP$-complete \cite{GI09, GI13, BCO17}.
\item The 2D Heisenberg and Hubbard models with local magnetic fields are \QMA-complete \cite{SV09}.
\end{itemize}
Consequently, unless $\MA=\QMA$, not all ground states of local Hamiltonians have efficient classical representations that support efficient computation of local expectation values. Here, the first ``efficient'' means that the classical representation is of polynomial size, and the second means that computing local expectation values from the classical representation is in \BPP. It should be emphasized that the latter ``efficient'' is essential. Indeed, the local Hamiltonian itself, which is the sum of a polynomial number of terms, is an efficient classical representation of its ground state, but this trivial representation does not support efficient computation of local expectation values unless $\BPP=\QMA$.

Entanglement, a concept of quantum information theory, has been widely used in condensed matter and statistical physics to provide insights beyond those obtained via ``conventional'' quantities. For example, it characterizes quantum chaos \cite{DKPR16, VR17, DLL18, LCB18, Hua19NPB, LG19, HG19}, criticality \cite{HLW94, VLRK03, LRV04, CC04, RM04, CC09, HM14}, and dynamics \cite{CC05, ZPP08, BPM12, KH13, Hua17, RPv19, Hua19d, Hua20}. A generic quantum many-body state obeys a volume law: The entanglement between a subsystem (smaller than half the system size) and its complement scales as the size (volume) of the subsystem \cite{Pag93, HLW06}. Perhaps surprisingly, many physical states obey an area law: The entanglement scales as the boundary (area) of the subsystem \cite{ECP10}. Area laws are of great interest in the field of quantum Hamiltonian complexity \cite{Osb12, GHLS14, Hua15} because of not only their mathematical beauty, but also their relevance to the computational complexity of 1D quantum systems. Bounded (or even a logarithmic divergence of) Renyi entanglement entropy across every bipartite cut implies \cite{VC06, SWVC08, Hua19s, DB19, Hua19p} an efficient matrix product state (MPS) representation \cite{FNW92, Vid03, PVWC07}, which underlies the celebrated density matrix renormalization group algorithm \cite{Whi92, Whi93}. Since computing local expectation values for MPS is in \P, the 1D local Hamiltonian problem with the constraint that the ground state obeys an area law is in \NP. Moreover, an intermediate result from the proof \cite{Has07, AKLV13, Hua14} of the area law for ground states of 1D gapped systems is an essential ingredient of the (provably) polynomial-time algorithm \cite{LVV14, CF16, ALVV17, algorithm, Hua1505, Hua1510} for computing these states, concluding that the 1D gapped local Hamiltonian problem is in \P.

2D (and 3D) quantum systems can host exotic states of matter, and are much more exciting and challenging. Indeed, few rigorous positive results are known in 2D. Whether an area law can be proved for ground states of 2D gapped systems is a very important open problem in the field of quantum Hamiltonian complexity (see Ref. \cite{AAG21} for recent progress in frustration-free systems). Moreover, one might ask: Which class of 2D ground states has efficient classical representations that support efficient computation of local expectation values? If such a classical representation exists, can it be computed efficiently? Much effort has been devoted to extending methods and tools from 1D to 2D \cite{VMC08, CV09, CT17, DLD17, GD17, CCX+18, HM17}. As generalizations of MPS, projected entangled pair states \cite{VC04} and the multiscale entanglement renormalization ansatz \cite{Vid07} are two classes of tensor network states. They do not \cite{SWVC07, HHEG20} and do \cite{Vid08} support efficient computation of local expectation values, respectively. It is commonly believed that \emph{physical states with area laws have efficient tensor network representations}. This belief cannot be proved because ``physical'' is not defined. We do not attempt to define it here, but rather rely on intuition to judge what is physical. In particular, ground states of local Hamiltonians are more physical than generic quantum many-body states, and translation-invariant geometrically local Hamiltonians on a lattice are more physical than generic local Hamiltonians.

Despite the belief (in italics above), Ge and Eisert \cite{GE14} recently proved that there exist states that obey area laws for all Renyi entanglement entropies but do not have efficient classical representations. The main idea of the proof is so elegant that we sketch it here. Consider the question: How large is the set of all states with area laws? A subset $S$ can be explicitly constructed such that all states in $S$ obey area laws and that $S$ is parameterized by an exponential number of independent parameters. Since the volume of $S$ is too large, a generic state in $S$ cannot be approximately represented in polynomial space. This counting approach is powerful: It applies regardless of whether the classical representation supports efficient computation of local expectation values. Therefore, a generic state in $S$ is neither a tensor network state with polynomial bond dimension nor a (non-degenerate) eigenstate of a local Hamiltonian \cite{GE14}.

In this paper, we show that the 2D local Hamiltonian problem with the constraint that the ground state obeys area laws is \QMA-complete (Theorem \ref{cor1}). We also prove similar results in 2D translation-invariant systems (Theorem \ref{cor2}) and for the 3D Heisenberg (Theorem \ref{pro1}) and Hubbard (Theorem \ref{pro2}) models with local magnetic fields. Consequently, unless $\MA=\QMA$, not all ground states of 2D local Hamiltonians with area laws have efficient classical representations that support efficient computation of local expectation values (Corollary \ref{cor}). Our results are complementary to those of Ref. \cite{GE14}. This reference considers general quantum many-body states while we restrict ourselves to ground states of local Hamiltonians, which are more physical. Technically, the counting approach (sketched above), which yields the results of Ref. \cite{GE14}, does not work in our context. We emphasize that our results do not diminish the importance of area laws. A proof of (or a counterexample to) an area law for ground states of 2D gapped systems is, of course, a landmark achievement, which probably requires the development of powerful new techniques. However, even if such area laws are proved, the computational complexity of 2D gapped systems remains unclear.

\section{Preliminaries}

Throughout this paper, standard asymptotic notations are used extensively. Let $f,g:\mathbb R^+\to\mathbb R^+$ be two functions. One writes $f(x)=O(g(x))$ if and only if there exist positive numbers $M,x_0$ such that $f(x)\le Mg(x)$ for all $x>x_0$; $f(x)=\Omega(g(x))$ if and only if there exist positive numbers $M,x_0$ such that $f(x)\ge Mg(x)$ for all $x>x_0$; $f(x)=\Theta(g(x))$ if and only if there exist positive numbers $M_1,M_2,x_0$ such that $M_1g(x)\le f(x)\le M_2g(x)$ for all $x>x_0$. Accounting for the finite precision of numerical computing, every real number is assumed to be represented by a polynomial number of bits.

We specialize the local Hamiltonian problem to geometrically local Hamiltonians on a lattice. Let $E(H)$ be the ground-state energy of a Hamiltonian $H$.

\begin{definition} [local Hamiltonian problem on a lattice]
Consider a quantum many-body system of spins (or bosons, fermions) on a lattice. We are given a geometrically local Hamiltonian $H$ with nearest-neighbor interactions (and on-site terms) and a real number $a$ with the promise that either
\begin{itemize}
\item (Yes) $E(H)\le a$ or
\item (No) $E(H)\ge a+\delta$,
\end{itemize}
where $\delta$ is some inverse polynomial in the system size. The task is to decide which is the case.
\end{definition}

As the quantum analog of \MA, \QMA~is the class of problems that can be efficiently verified by a quantum computer.

\begin{definition} [\MA] \label{d:ma}
A decision problem is in \MA~if there is a uniform family of polynomial-size classical circuits $\{C_x\}$ (one for each input instance $x$) such that
\begin{itemize}
\item If $x$ is a yes-instance, then there exists a polynomial-length bit string $y$ such that $C_x$ accepts $y$ with probability greater than $2/3$.
\item If $x$ is a no-instance, then $C_x$ accepts any polynomial-length bit string $y$ with probability less than $1/3$.
\end{itemize}
\end{definition}

\begin{definition} [\QMA~\cite{KSV02}]
A decision problem is in \QMA~if there is a uniform family of polynomial-size quantum circuits $\{Q_x\}$ such that
\begin{itemize}
\item If $x$ is a yes-instance, then there exists a polynomial-size quantum state $|y\rangle$ such that $Q_x$ accepts $|y\rangle$ with probability greater than $2/3$.
\item If $x$ is a no-instance, then $Q_x$ accepts any polynomial-size quantum state $|y\rangle$ with probability less than $1/3$.
\end{itemize}
A problem in \QMA~is \QMA-complete if every problem in \QMA~is polynomial-time reducible to it.
\end{definition}

The entanglement entropy is the standard measure of entanglement in quantum information and condensed matter theory.

\begin{definition} [entanglement entropy]
The Renyi entanglement entropy $S_\alpha$ with index $\alpha\in(0,1)\cup(1,+\infty)$ of a bipartite pure state $\rho_{AB}$ is defined as
\begin{equation}
S_\alpha(\rho_A):=\frac{1}{1-\alpha}\log\tr(\rho_A^\alpha),
\end{equation}
where $\rho_A=\tr_B\rho_{AB}$ is the reduced density matrix. Two limits are of particular interest:
\begin{equation}
S_0(\rho_A):=\lim_{\alpha\to0^+}S_\alpha(\rho_A)
\end{equation}
is the logarithm of the Schmidt rank, and
\begin{equation}
S_1(\rho_A):=\lim_{\alpha\to1}S_\alpha(\rho_A)=-\tr(\rho_A\log\rho_A)
\end{equation}
is the von Neumann entanglement entropy.
\end{definition}

\begin{definition} [area law for $S_\alpha$]
On a lattice, a state $|\psi\rangle$ obeys an area law if
\begin{equation}
S_\alpha(\rho_A)=O(|\partial A|)
\end{equation}
for any subsystem $A$, where $\rho_A=\tr_{\bar A}|\psi\rangle\langle\psi|$ is the reduced density matrix, and $\partial A$ is the set of edges connecting subsystem $A$ and its complement $\bar A$.
\end{definition}

Since $S_\alpha$ is monotonically non-increasing with respect to $\alpha$, an area law for $S_\alpha$ implies that for $S_\beta$ if $\alpha\le\beta$. In 1D, bounded (or even a logarithmic divergence of) $S_0$ across every bipartite cut implies an efficient exact (up to the truncation of real numbers) MPS representation \cite{Vid03}; bounded (or a logarithmic divergence of) $S_\alpha$ for any $0<\alpha<1$ implies an efficient MPS approximation \cite{VC06, SWVC08}.

\section{Main results} \label{main}

In this section we prove our main result: The 2D local Hamiltonian problem with the constraint that the ground state obeys area laws is \QMA-complete. We begin with a pair of technical lemmas. Recall that $E(H)$ is the ground-state energy of $H$.

\begin{lemma} \label{l:gen}
We are given a 1D local Hamiltonian
\begin{equation} \label{eq:1dh}
H'=\sum_{i=1}^{n-1}H'_{i,i+1},
\end{equation}
where $H'_{i,i+1}$ with $\|H'_{i,i+1}\|\le1$ acts on spins $i$ and $i+1$ (nearest-neighbor interaction). Then, a local Hamiltonian $H$ on a 2D square lattice can be efficiently constructed such that
\begin{itemize}
\item The ground state $|\psi\rangle$ of $H$ obeys area laws for $S_\alpha(\alpha\ge0)$.
\item $E(H)=E(H')$.
\end{itemize}
\end{lemma}

\begin{proof}
Suppose that $H'$ acts on a chain of spins, each of which has local dimension $d=\Theta(1)$. Then, $H$ acts on a 2D square lattice of size $n\times n$, and at each lattice site there is a spin of local dimension $d$. All spins are labeled by two indices $i,j$ with $1\le i,j\le n$. The coupling between spins $(i,j)$ and $(i',j')$ is denoted by $H_{i,j}^{i',j'}$, which is nonzero only if $j=j'=1$ and $|i-i'|=1$. Define
\begin{equation}
H_{i,1}^{i+1,1}=H'_{i,i+1}
\end{equation}
for $1\le i\le n-1$. Let
\begin{equation}
H=\sum_{i=1}^{n-1}H_{i,1}^{i+1,1}+\sum_{j=2}^n\sum_{i=1}^n\big(S_{i,j}^z+d/2-1/2\big),
\end{equation}
where $S_{i,j}^z$ is the $z$ component of the spin operator of spin $(i,j)$. The ground state of $H$ is
\begin{equation}
|\psi\rangle=|\psi'\rangle\otimes\bigotimes_{j=2}^n\bigotimes_{i=1}^n|0\rangle_{i,j},
\end{equation}
where $|\psi'\rangle$ is the ground state of $H'$, and $|0\rangle_{i,j}$ is an eigenstate of $S_{i,j}^z$ with eigenvalue $1/2-d/2$. It is easy to see that $H$ has the properties stated in Lemma \ref{l:gen}.
\end{proof}

\begin{lemma} \label{thm}
We are given a 1D local Hamiltonian $H'$ (\ref{eq:1dh}). Then, a local Hamiltonian $H$ on a 2D square lattice can be efficiently constructed such that
\begin{itemize}
\item $H$ is translationally invariant if and only if $H'$ is translationally invariant.
\item The ground state $|\psi\rangle$ of $H$ obeys area laws for $S_\alpha(\alpha>0)$.
\item $|E(H)-2E(H')-a|\le O(\delta)$, where $a$ is a real number that can be computed efficiently, and $\delta=1/\poly(n)$ is an arbitrary inverse polynomial in $n$.
\end{itemize}
\end{lemma}

\begin{proof}
We construct $H$ by stacking layers of $H'$ so that $H$ is translationally invariant in the direction perpendicular to the layers. Strong interlayer coupling is introduced so that the bulk of $H$ is almost trivial. The almost trivial bulk ``dilutes'' the entanglement and leads to area laws. The boundary of $H$ is nontrivial and reproduces the physics of $H'$.

\paragraph*{Construction of $H$} Suppose that $H'$ acts on a chain of spins, each of which has local dimension $d=\Theta(1)$. Then, $H$ acts on a 2D square lattice of size $n\times n$, and at each lattice site there are two spins of local dimension $d$ (you may combine the two spins into a single spin of local dimension $d^2$ if one spin per site is preferred). All spins are labeled by three indices $i,j,k$ with $1\le i,j\le n$ and $k=1,2$. The coupling between spins $(i,j,k)$ and $(i',j',k')$ is denoted by $H_{i,j,k}^{i',j',k'}$, which is nonzero only if $|i-i'|+|j-j'|=1$ (nearest-neighbor interaction). Define the terms within each layer as
\begin{equation} \label{eq:cons}
H_{i,j,k}^{i+1,j,k}=H'_{i,i+1}
\end{equation}
for $1\le i\le n-1$, $1\le j\le n$, and $k=1,2$. Define the terms between adjacent layers as
\begin{equation}
H_{i,j,2}^{i,j+1,1}=\big(\vec S_{i,j,2}\cdot\vec S_{i,j+1,1}+d^2/4-1/4\big)\Omega(n^3/\delta)
\end{equation}
for $1\le i\le n$ and $1\le j\le n-1$, where $\vec S_{i,j,k}:=(S^x_{i,j,k},S^y_{i,j,k},S^z_{i,j,k})$ is the spin vector operator of spin $(i,j,k)$, and
\begin{equation}
\vec S_{i,j,k}\cdot\vec S_{i',j',k'}:=S^x_{i,j,k}S^x_{i',j',k'}+S^y_{i,j,k}S^y_{i',j',k'}+S^z_{i,j,k}S^z_{i',j',k'}
\end{equation}
is a physical antiferromagnetic Heisenberg interaction. Since
\begin{equation}
2\vec S_{i,j,2}\cdot\vec S_{i,j+1,1}=\big(\vec S_{i,j,2}+\vec S_{i,j+1,1}\big)^2-\vec S_{i,j,2}^2-\vec S_{i,j+1,1}^2=\big(\vec S_{i,j,2}+\vec S_{i,j+1,1}\big)^2-(d^2-1)/2,
\end{equation}
the ground state of $H_{i,j,2}^{i,j+1,1}$ is a singlet (a state with zero total spin), and $E(H_{i,j,2}^{i,j+1,1})=0$. Let $H=\sum_{j=0}^nH_j$, where on the boundary,
\begin{equation}
H_0=\sum_{i=1}^{n-1}H_{i,1,1}^{i+1,1,1},\quad H_n=\sum_{i=1}^{n-1}H_{i,n,2}^{i+1,n,2}
\end{equation}
act, respectively, on spins $(i,1,1)$ and on spins $(i,n,2)$ with $1\le i\le n$; in the bulk $1\le j\le n-1$,
\begin{equation} \label{eq:bulk}
H_j=\sum_{i=1}^{n-1}\big(H_{i,j,2}^{i+1,j,2}+H_{i,j+1,1}^{i+1,j+1,1}\big)+\sum_{i=1}^nH_{i,j,2}^{i,j+1,1}
\end{equation}
acts on spins $(i,j,2)$ and $(i,j+1,1)$ with $1\le i\le n$. Since the supports of $H_j$'s are pairwise disjoint, the ground state $|\psi\rangle=\bigotimes_{j=0}^n|\psi_j\rangle$ of $H$ is a product state in the $j$ direction, where $|\psi_j\rangle$ is the ground state of $H_j$.

\paragraph*{Translational invariance} By construction, $H$ is translationally invariant in the $j$ direction, and is translationally invariant in the $i$ direction if and only if $H'$ is translationally invariant.

\paragraph*{Entanglement entropy of $|\psi\rangle$} For notational simplicity and without loss of generality, we assume that region $A$ is rectangular and consists of spins $(i,j,k)$ with indices $i_1\le i\le i_2$, $j_1\le j\le j_2$, and $k=1,2$. (It is easy to extend the argument to a region of arbitrary shape.) Since $|\psi\rangle=\bigotimes_{j=0}^n|\psi_j\rangle$ is a product state, the entanglement entropy of $|\psi\rangle$ is the sum of that of each $|\psi_j\rangle$.
\begin{itemize}
\item The entanglement of $|\psi_j\rangle$ for $j\le j_1-2$ or $j\ge j_2+1$ is exactly zero because such $|\psi_j\rangle$ is entirely in $\bar A$.
\item The Renyi entanglement entropy of $|\psi_j\rangle$ for $j=j_1-1$ or $j=j_2$ is trivially upper bounded by $O(i_2-i_1+1)$.
\item The Renyi entanglement entropy $S_\alpha(\alpha>0)$ of $|\psi_j\rangle$ for $j_1\le j\le j_2-1$ is $O(1)$. This follows from
\end{itemize}
\begin{lemma} [area law for ground states of 1D gapped systems \cite{Has07, AKLV13, Hua14}] \label{l:arealaw}
Let $|\Psi\rangle$ be the ground state of a 1D local Hamiltonian $G=\sum_{i=1}^{n-1}G_{i,i+1}$, where $G_{i,i+1}$ with $\|G_{i,i+1}\|\le1$ acts on spins $i$ and $i+1$ (nearest-neighbor interaction). Suppose that the energy gap (the difference between the smallest and the second smallest eigenvalues) of $G$ is $\Omega(1)$. The Renyi entanglement entropy $S_\alpha(\alpha>0)$ of $|\Psi\rangle$ is $O(1)$ per cut.
\end{lemma}
Indeed, $|\psi_j\rangle$ is the ground state of $H_j$, which is a 1D local Hamiltonian by combining spins $(i,j,2)$ and $(i,j+1,1)$ into a single spin for every $1\le i\le n$. Using Weyl's inequality, after rescaling $H_j$ such that the norm of each of its terms is $O(1)$, we find that its energy gap is $\Omega(1)$. Summing up the three cases above, we obtain the upper bound $O(i_2-i_1+j_2-j_1+1)$ or 2D area laws for $S_\alpha(\alpha>0)$.

\paragraph*{Ground-state energy of $H$} Since the supports of $H_j$'s are pairwise disjoint,
\begin{equation} \label{eq:eng}
E(H)=\sum_{j=0}^nE(H_j)=2E(H')+(n-1)E(H_1),
\end{equation}
where we used $E(H_0)=E(H_n)=E(H')$ and the translational invariance in the $j$ direction in the bulk. $E(H_1)$ can be estimated using the projection lemma.
\begin{lemma} [projection lemma \cite{KKR06}] \label{proj}
Let $G_1,G_2$ be two Hamiltonians acting on a Hilbert space $\mathcal H=\mathcal H^\parallel\oplus\mathcal H^\perp$. Suppose that $G_2|_{\mathcal H^\parallel}=0$ and $G_2|_{\mathcal H^\perp}\ge J>2\|G_1\|$, where $G_2|_{\cdots}$ is the restriction of $G_2$ to a subspace. Then,
\begin{equation}
E\big(G_1|_{\mathcal H^\parallel}\big)-\frac{\|G_1\|^2}{J-2\|G_1\|}\le E(G_1+G_2)\le E\big(G_1|_{\mathcal H^\parallel}\big).
\end{equation}
\end{lemma}
Let $H_1=G_1+G_2$ with
\begin{equation}
G_1=\sum_{i=1}^{n-1}\big(H_{i,1,2}^{i+1,1,2}+H_{i,2,1}^{i+1,2,1}\big),\quad G_2=\sum_{i=1}^nH_{i,1,2}^{i,2,1}.
\end{equation}
Since the supports of $H_{i,1,2}^{i,2,1}$'s are pairwise disjoint,
\begin{equation}
E(G_2)=\sum_{i=1}^nE\big(H_{i,1,2}^{i,2,1}\big)=0,
\end{equation}
and the ground state $|\phi\rangle$ of $G_2$ is a product of singlets (unique). The ground space $\mathcal H^\parallel=\Span\{|\phi\rangle\}$ is one-dimensional, and $J=\Omega(n^3/\delta)$ is the energy gap of $G_2$. Since $\|G_1\|=O(n)$,
\begin{align}
E\big(G_1|_{\mathcal H^\parallel}\big)=\langle\phi|G_1|\phi\rangle&\implies\big|E(H_1)-\langle\phi|G_1|\phi\rangle\big|\le\frac{\|G_1\|^2}{J-2\|G_1\|}=\frac{O(n^2)}{J}=\frac{O(\delta)}{n}\nonumber\\
&\implies|E(H)-2E(H')-a|\le O(\delta),
\end{align}
where $a=(n-1)\langle\phi|G_1|\phi\rangle$ can be computed efficiently as $|\phi\rangle$ is a product of singlets. The computation time is $O(n)$ if $H'$ is not translationally invariant and $O(1)$ if $H'$ is.
\end{proof}

The construction in the proof of Lemma \ref{thm} does not lead to an area law for $S_0$, because the bulk of $H$ is almost but not completely trivial. Note that $S_0$ (the logarithm of the Schmidt rank) is not a good measure of entanglement. It is discontinuous and an infinitesimal perturbation almost always leads to maximum $S_0$.

The following is the state-of-the-art \QMA-completeness result for the 1D local Hamiltonian problem.

\begin{lemma} [\cite{AGIK07, AGIK09, HNN13}] \label{1dqma}
The local Hamiltonian problem on a chain of spin-$7/2$'s with nearest-neighbor interactions is \QMA-complete.
\end{lemma}

\begin{theorem} \label{cor1}
The 2D local Hamiltonian problem with the constraint that the ground state obeys area laws for $S_\alpha(\alpha\ge0)$ is \QMA-complete.
\end{theorem}

\begin{proof}
This follows immediately from Lemmas \ref{l:gen}, \ref{1dqma}.
\end{proof}

\begin{corollary} \label{cor}
Unless $\MA=\QMA$, not all ground states of 2D local Hamiltonians with area laws for $S_\alpha(\alpha\ge0)$ have efficient classical representations that support efficient computation of local expectation values.
\end{corollary}

\begin{proof}
This follows from the observation that such a classical representation can be taken as a proof to the verifier in the definition of \MA~(Definition \ref{d:ma}).
\end{proof}

Translational invariance introduces a slight technical complication that we have to deal with. Normally, the computational complexity is measured in terms of the input size, e.g., a decision problem is in \P~if it can be solved in polynomial time in the input size. Without translational invariance, the input size of a (not necessarily geometrically) local Hamiltonian is polynomial in the system size (as a polynomial number of terms must be specified) so that the computational complexity can be equivalently measured in terms of the system size. With translational invariance, the input size of a geometrically local Hamiltonian is the logarithm of the system size (the number of bits required to represent the system size). In this case, an exponential-time algorithm (in the input size) is efficient in the sense that its running time is polynomial in the system size.

\begin{lemma} [\cite{GI09, GI13, BCO17}] \label{tiqma}
The 1D translation-invariant local Hamiltonian problem is $\QMA_\EXP$-complete (in the input size).
\end{lemma}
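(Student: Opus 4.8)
The plan is to prove both containments. For a translationally invariant nearest-neighbor chain the only free data are the constant-size interaction term and the chain length $N$, so the input has size $\Theta(\log N)$ and a procedure is ``efficient'' precisely when it runs in $\mathrm{poly}(N)$ time. Membership in $\mathrm{QMA}_\mathrm{EXP}$ is the routine direction: a verifier receives a witness of size $\mathrm{poly}(N)$ (exponential in the input size), decomposes $H$ into its $N-1$ identical nearest-neighbor terms, and runs a standard energy-estimation protocol---phase estimation on $H/\|H\|$ to additive precision $1/\mathrm{poly}(N)$, or Kitaev's measure-a-random-term test with amplification---accepting iff the estimate lies below $a$. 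The completeness and soundness analysis of the local Hamiltonian problem carries over once one checks that all arithmetic is done to $\mathrm{poly}(N)$ bits, which fits the $\mathrm{poly}(N)$ time budget. Hence the problem is in $\mathrm{QMA}_\mathrm{EXP}$.

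For hardness I would reduce from an arbitrary $L\in\mathrm{QMA}_\mathrm{EXP}$, verified by a $\mathrm{poly}(n)$-size quantum circuit family $\{V_x\}$ on $n$-bit inputs. The difficulty, relative to the non-translationally-invariant 1D case \cite{AGIK09} underlying Lemma~\ref{1dqma}, is that one cannot place input-dependent terms on the chain: the only parameter that can carry $x$ is the length $N$. So I would set $N=2^{n}+x$ (reading $x\in\{0,1\}^n$ as an integer, with $\mathrm{poly}(n)$ extra padding bits if the construction requires), an integer of $\Theta(n)=\Theta(\log N)$ bits that is computable in $\mathrm{poly}(n)$ time; the chain then has length $N=2^{\Theta(n)}$. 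The two ends of the finite chain are the only distinguished sites and will anchor the whole construction.

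The Hamiltonian $H$ would be a sum of three translationally invariant pieces, each a single fixed two-body term repeated along the chain. First, a ``legal configuration'' term penalizing every forbidden pair of adjacent qudit symbols, so that zero-energy configurations describe a well-formed work tape carrying a single Turing-machine head. Second, a Feynman--Kitaev propagation term that, via one site-independent transition rule, forces the recorded history to obey the steps of a fixed universal machine: starting from a clock anchored at an end of the chain, the head first executes a binary-counter subroutine that sweeps the length-$N$ chain and writes the $\Theta(n)$-bit binary expansion of $N$---hence the bits of $x$---onto a work segment, and then simulates $V_x$ on those bits together with fresh ancillas. Third, ``boundary'' penalty terms that fire only at the chain's ends, enforcing a blank initial tape and penalizing a final configuration in which $V_x$ has rejected. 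Standard estimates then give ground energy $\le a$ when $x\in L$ and $\ge a+1/\mathrm{poly}(N)$ when $x\notin L$; since the counter sweep dominates, the clock runs for $T=\mathrm{poly}(N)$ steps and Kitaev's geometric Hamiltonian supplies the spectral gap $\Omega(1/T^2)=1/\mathrm{poly}(N)$ that yields the required inverse-polynomial promise gap.

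The main obstacle is engineering the translationally invariant clock and transition rule themselves: one needs a constant-size qudit alphabet and a single position-independent two-body rule whose unique low-energy states are exactly the intended computational histories, with no hardwired terms available to initialize the clock, to count how far the head has travelled, or to detect termination, and with the local dimension kept $O(1)$ and the spectral gap $1/\mathrm{poly}(N)$. In particular the binary-counter gadget that recovers $x$ from the chain length must mesh cleanly with the circuit-to-Hamiltonian simulation, and one must rule out spurious periodic configurations that could dip below the intended ground energy. This delicate design is precisely the content of \cite{GI13}, and it is also the ``technical reason'' referred to in Section~\ref{main}: because the input hides inside $N$, the natural complexity parameter is $\log N$, so the hardness is necessarily $\mathrm{QMA}_\mathrm{EXP}$ rather than $\mathrm{QMA}$.
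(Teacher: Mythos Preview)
The paper does not prove Lemma~\ref{tiqma}: it is stated as a result of \cite{GI13} and invoked as a black box in the one-line proof of Corollary~\ref{cor2}. There is therefore no ``paper's own proof'' to compare against; what you have written is, in effect, a sketch of the Gottesman--Irani argument itself, which goes well beyond anything the paper attempts.

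As a sketch of \cite{GI13} your outline is broadly on target: encode the instance in the chain length $N$, have a translationally invariant head run a counter that writes the bits of $N$ (hence $x$) onto the tape, then simulate a universal verifier via a history-state Hamiltonian. Two points of imprecision are worth noting. First, your ``boundary penalty terms that fire only at the chain's ends'' must themselves be translationally invariant terms; in \cite{GI13} the endpoints are detected not by position-dependent penalties but by special endpoint symbols forced by the legal-configuration rules on an open chain (the end sites being distinguished only by having one neighbor). Second, the specific encoding $N=2^n+x$ and the phrase ``simulates $V_x$'' gloss over the fact that \cite{GI13} work with a fixed universal quantum Turing machine rather than a circuit family, and the counter/clock construction is considerably more intricate than a single binary-counter sweep. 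You yourself flag these as the hard parts and defer to \cite{GI13}; that is exactly what the paper does as well.
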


For brevity, we do not define $\QMA_\EXP$-completeness here, but refer the reader to the original papers \cite{GI09, GI13} for a formal definition. Ignoring some technical subtleties, ``$\QMA_\EXP$-complete (in the input size)'' in Lemma \ref{tiqma} can be intuitively understood as ``\QMA-complete in the system size.''

\begin{theorem} \label{cor2}
The 2D translation-invariant local Hamiltonian problem with the constraint that the ground state obeys area laws for $S_\alpha(\alpha>0)$ is $\QMA_\EXP$-complete (in the input size).
\end{theorem}

\begin{proof}
This follows immediately from Lemmas \ref{thm}, \ref{tiqma}.
\end{proof}

The Hamiltonians in Theorem \ref{cor2} have open boundary conditions. It is an open problem to prove the same result for 2D translation-invariant local Hamiltonians with periodic boundary conditions.

\section{Extensions}

In this section, we extend our results to the 3D Heisenberg and Hubbard models with local magnetic fields. These models are more physical than generic local Hamiltonians on a lattice.

\begin{lemma} [\cite{SV09}] \label{pro0}
On a 2D square lattice of size $n\times n$, both the ferromagnetic and antiferromagnetic spin-$1/2$ Heisenberg models
\begin{equation} \label{2dhei}
H'=\pm\sum_{\langle i',j'\rangle}\vec{\sigma}_{i'}\cdot\vec{\sigma}_{j'}-\sum_{i'}\vec{h}'_{i'}\cdot\vec{\sigma}_{i'}
\end{equation}
with $\max_{i'}|\vec{h}'_{i'}|\le\poly(n)$ are \QMA-complete, where $\langle i',j'\rangle$ denotes nearest neighbors, and $\vec{\sigma}_{i'}:=(\sigma_{i'}^x,\sigma_{i'}^y,\sigma_{i'}^z)$ is the vector of Pauli matrices at site $i'$.
\end{lemma}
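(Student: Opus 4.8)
The top-level plan is short: Lemma \ref{pro0} is the main Heisenberg result of Schuch and Verstraete \cite{SV09}, so the task is only to recall why it holds rather than to prove something new. Membership in QMA is the easy direction --- $H'$ is a $2$-local Hamiltonian on a planar nearest-neighbor graph with polynomially bounded coefficients, so Kitaev's phase-estimation verifier for the local Hamiltonian problem applies essentially verbatim. The content is QMA-hardness.

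For hardness I would reduce from a known QMA-complete $2$-local qubit Hamiltonian on a 2D lattice \cite{OT08} in two stages. \emph{Stage 1 (normalize the interaction types):} using the perturbation-theory gadgets of \cite{KKR06, OT08}, replace arbitrary $2$-local Pauli couplings by a restricted universal set --- say $\sigma^z\sigma^z$ couplings together with single-site $\sigma^x$ and $\sigma^z$ fields --- while keeping the interaction graph planar and nearest-neighbor (introducing mediator qubits where routing requires it and absorbing the induced energy shift into the constant $a$). \emph{Stage 2 (emulate with Heisenberg bonds and fields only):} a strong site-dependent field $\vec{h}'_{i'}$ pins an ancilla into a definite $\sigma^z$ eigenstate, and in that low-energy subspace an isotropic Heisenberg bond $\vec\sigma_i\cdot\vec\sigma_j$ incident to the ancilla acts on its neighbor as an effective $\sigma^z$ field; chaining two such pinned ancillas yields an effective $\sigma^z\sigma^z$ interaction between logical qubits, while the on-site fields $\vec{h}'_{i'}$ supply the $\sigma^x$ and $\sigma^z$ terms directly. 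One then invokes the projection lemma (Lemma \ref{proj}) to bound the perturbative error and takes the gadget fields polynomially large so that the Yes/No promise gap shrinks by at most an inverse polynomial.

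The main obstacle is the $SU(2)$ invariance of the Heisenberg bond $\vec\sigma_i\cdot\vec\sigma_j$: the two-body part of $H'$ singles out no axis, so \emph{every} anisotropic, computationally meaningful feature must be injected through the on-site fields $\vec{h}'_{i'}$. This is what forces the gadget fields to be strong --- and hence forces a careful perturbative bookkeeping to keep all errors below $\delta$ --- and forces the logical structure to be routed entirely through pinned-ancilla gadgets while the underlying interaction graph remains a 2D nearest-neighbor lattice. Reconciling these constraints is the technical heart of \cite{SV09}; here we simply cite their theorem.
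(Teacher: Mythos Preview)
The paper does not prove Lemma~\ref{pro0} at all: it is stated with the citation \cite{SV09} and used as a black box. Your proposal is consistent with this --- you explicitly end by saying ``here we simply cite their theorem'' --- so there is nothing to compare at the level of proof strategy.

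That said, a brief remark on the sketch you offer for completeness. Your two-stage outline (normalize to a restricted universal set, then emulate via pinned ancillas and Heisenberg bonds) captures the spirit of \cite{SV09}, but the actual construction there is somewhat different in its details: Schuch and Verstraete place mediator qubits in strong local fields and use second-order perturbation theory so that virtual excitations of the mediator generate the desired effective two-body couplings between the logical qubits; the effective interaction is not obtained by first-order pinning of an ancilla turning $\vec\sigma_i\cdot\vec\sigma_j$ into a $\sigma^z$ field, but rather through the second-order self-energy of the mediator. Your description of ``chaining two such pinned ancillas'' to get $\sigma^z\sigma^z$ is therefore a bit off as a summary of their gadget. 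None of this affects the present paper, which only needs the statement.
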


\begin{theorem} \label{pro1}
On a 3D cubic lattice of size $n\times n\times n$, both the ferromagnetic and antiferromagnetic spin-$1/2$ Heisenberg models
\begin{equation} \label{3dhei}
H=\pm\sum_{\langle i,j\rangle}\vec{\sigma}_i\cdot\vec{\sigma}_j-\sum_i\vec{h}_i\cdot\vec{\sigma}_i
\end{equation}
with $\max_i|\vec{h}_i|\le\poly(n)$ and the constraint that the ground state obeys area laws for $S_\alpha(\alpha\ge1)$ are \QMA-complete.
\end{theorem}

\begin{proof}
Given a 2D Hamiltonian (\ref{2dhei}), a 3D Hamiltonian (\ref{3dhei}) can be efficiently constructed such that
\begin{itemize}
\item The ground state $|\psi\rangle$ of $H$ obeys area laws for $S_\alpha(\alpha\ge1)$.
\item $|E(H)-E(H')-a|\le O(\delta)$, where $a$ is a real number that can be computed efficiently, and $\delta=1/\poly(n)$.
\end{itemize}
Then, Theorem \ref{pro1} follows from Lemma \ref{pro0}.

We label a site in the 2D square lattice by two indices $i'=(i'_x,i'_y)$ and that in the 3D cubic lattice by three indices $i=(i_x,i_y,i_z)$ with $1\le i'_x,i'_y,i_x,i_y,i_z\le n$. Let $p=\max_{i'}|\vec{h}'_{i'}|+n\le\poly(n)$ and $q=n^4p^2/\delta$. Define the local magnetic fields in $H$ as
\begin{equation}
\vec{h}_i=
\begin{cases}
\vec{h}'_{(i_x,i_y)}\pm(0,0,1) & i_z=1\\
(0,0,q) & 2\le i_z\le n
\end{cases}.
\end{equation}

Using the projection lemma (Lemma \ref{proj}), let $H=G_1+G_2-(n-1)n^2q$ with
\begin{align}
&G_1=\pm\sum_{\langle i,j\rangle}\vec{\sigma}_i\cdot\vec{\sigma}_j-\sum_{i_x,i_y=1}^n\vec{h}_{(i_x,i_y,1)}\cdot\vec{\sigma}_{(i_x,i_y,1)},\\
&G_2=(n-1)n^2q-\sum_{i_x,i_y=1}^n\sum_{i_z=2}^n\vec{h}_i\cdot\vec{\sigma}_i
\end{align}
so that $E(G_2)=0$. The ground space
\begin{equation}
\mathcal H^\parallel=\Span\{|\phi\rangle:\langle\phi|\sigma_i^z|\phi\rangle=1,~\forall i_z\ge2,~\forall i_x,~\forall i_y\}
\end{equation}
of $G_2$ is $2^{n^2}$-dimensional, and $J=2q$ is the smallest positive eigenvalue of $G_2$. Since $\|G_1\|=O(n^2p)$,
\begin{equation}
E\big(G_1|_{\mathcal H^\parallel}\big)=E(H')\pm(3n^3-6n^2+2n)\implies|E(H)-E(H')-a|\le\frac{\|G_1\|^2}{J-2\|G_1\|}=O(\delta)
\end{equation}
for $a=\pm(3n^3+6n^2-2n)-(n-1)n^2q$.

Let $\Pi$ be the projection onto the subspace $\mathcal H^\parallel$. Since $|\psi\rangle$ is the ground state of $G_1+G_2$,
\begin{multline}
J\|(1-\Pi)|\psi\rangle\|^2\le\langle\psi|(1-\Pi)G_2(1-\Pi)|\psi\rangle=\langle\psi|G_2|\psi\rangle\le\langle\psi|(G_1+G_2)|\psi\rangle+\|G_1\|\\
=E(G_1+G_2)+\|G_1\|\le E(G_2)+2\|G_1\|=2\|G_1\|.
\end{multline}
Let $|\psi'\rangle=\Pi|\psi\rangle/\|\Pi|\psi\rangle\|$ so that
\begin{equation}
\big|\langle\psi|\psi'\rangle\big|\ge1-O(\|G_1\|/J)=1-O(n^{-2}p^{-1}\delta).
\end{equation}
Since $|\psi'\rangle\in\mathcal H^\parallel$ trivially obeys an area law for $S_1$, $|\psi\rangle$ obeys area laws for $S_\alpha(\alpha\ge1)$ due to Lemma \ref{l:cont}($\alpha=1$) and the monotonicity of $S_\alpha$ with respect to $\alpha$.
\end{proof}

If the constraint on local magnetic fields in Theorem \ref{pro1} is relaxed to $\max_i|\vec{h}_i|\le e^{O(n^3)}$, then we can take $q=e^{O(n^3)}$ so that $|\psi\rangle$ in the above proof obeys area laws for $S_\alpha(\alpha>0)$ due to Lemma \ref{l:cont}($0<\alpha<1$).

\begin{lemma} [continuity of entanglement entropy] \label{l:cont}
Let $|\psi\rangle,|\phi\rangle$ be such that
\begin{equation}
T:=\sqrt{1-\big|\langle\psi|\phi\rangle\big|^2}\le1-1/d_A,
\end{equation}
where $d_A$ is the dimension of subsystem $A$. Let $\rho_A=\tr_{\bar A}|\psi\rangle\langle\psi|$ and $\sigma_A=\tr_{\bar A}|\phi\rangle\langle\phi|$. Then,
\begin{equation} \label{eq:cont}
|S_\alpha(\rho_A)-S_\alpha(\sigma_A)|\le
\begin{cases}
T\log(d_A-1)-T\log T-(1-T)\log(1-T) & \alpha=1 \\
\frac{1}{1-\alpha}\log\big((1-T)^\alpha+(d_A-1)^{1-\alpha}T^\alpha\big)& 0<\alpha<1
\end{cases}.
\end{equation}
\end{lemma}

\begin{proof}
Since the trace distance is non-increasing under partial trace,
\begin{equation}
T=\big\||\psi\rangle\langle\psi|-|\phi\rangle\langle\phi|\big\|_1/2\ge\|\rho_A-\sigma_A\|_1/2.
\end{equation}
Then, Lemma \ref{l:cont} follows from \cite{Nie00} the continuity of $S_\alpha$ (Theorem 1 and Eq. (A.3) of Ref. \cite{Aud07}) and the fact that both expressions on the right-hand side of (\ref{eq:cont}) are monotonically increasing for $T\in[0,1-1/d_A]$.
\end{proof}

\begin{lemma} [\cite{SV09}] \label{ll}
On a 2D square lattice of size $n\times n$, the Hubbard model at half filling 
\begin{equation}
H=-t\sum_{\langle i,j\rangle,s}c^\dag_{i,s}c_{j,s}+U\sum_ic^\dag_{i,\uparrow}c_{i,\uparrow}c^\dag_{i,\downarrow}c_{i,\downarrow}-\sum_i\vec{h}_i\cdot\vec{\sigma}_i
\end{equation}
with $0<t,U,\max_i|\vec{h}_i|\le\poly(n)$ is \QMA-complete, where $c^\dag_{i,s},c_{i,s}$ are the creation and annihilation operators of an electron with spin $s\in\{\uparrow,\downarrow\}$ at site $i$, and
\begin{equation}
\vec{\sigma}_i:=\sum_{s,s'}(\sigma_{ss'}^x,\sigma_{ss'}^y,\sigma_{ss'}^z)c^\dag_{i,s}c_{i,s'}
\end{equation}
is a vector of operators with $\sigma_{ss'}^{\cdots}$ the matrix elements of the Pauli matrices.
\end{lemma}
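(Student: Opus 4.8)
The plan is to prove both directions separately: membership in QMA and QMA-hardness, with the latter carrying essentially all of the work.

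Membership in QMA is routine. The witness is the purported ground state, supplied as a polynomial-size quantum state on qubits under a fixed fermion-to-qubit mapping (e.g.\ Jordan--Wigner with a fixed site ordering). Each term of $H$---a nearest-neighbor hopping, the on-site repulsion $U\,a^\dag_{i,\uparrow}a_{i,\uparrow}a^\dag_{i,\downarrow}a_{i,\downarrow}$, or a field term---maps to a bounded-norm (possibly nonlocal, but polynomial-weight) Pauli operator, whose expectation the verifier estimates by phase estimation on a polynomial-size circuit. Since $H$ is a sum of polynomially many such terms, the standard energy-estimation protocol certifies membership with an inverse-polynomial promise gap. Because total particle number commutes with $H$, the lattice Hamiltonian problem is well posed within each fixed filling sector, which I will use below.

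For QMA-hardness I would reduce the 2D square-lattice Heisenberg problem of Lemma \ref{pro0} to the 2D Hubbard problem, exploiting the large-$U$ superexchange mechanism. At half-filling, when the on-site repulsion $U$ dominates the hopping $t$, double occupancy is frozen out at energy scale $U$, a single electron localizes on each site, and second-order perturbation theory in $t/U$ yields an effective \emph{antiferromagnetic Heisenberg} interaction with uniform coupling $J=4t^2/U$. Crucially, this matches the structure of (\ref{2dhei}) exactly: the target couplings there are uniform and antiferromagnetic, so no bond-dependent engineering is needed, and the field term $-\sum_i\vec{h}_i\cdot\vec{\sigma}_i$ of the Hubbard model acts directly on the localized spins and reproduces $-\sum_{i'}\vec{h}'_{i'}\cdot\vec{\sigma}_{i'}$. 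Thus, given the target $H'$, I take uniform $t=O(1)$, copy the fields $\vec{h}_{(i_x,i_y)}=\vec{h}'_{i'}$, and choose $U=\mathrm{poly}(n)$, so that the low-energy sector of the Hubbard model reproduces $J\,H'$ (up to a known energy shift) with a known scale factor $J=\Theta(1/U)$.

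To make this a genuine reduction I would control the perturbation theory rigorously. Writing the Hubbard Hamiltonian as $\mathbf H_1+\mathbf H_2$, where $\mathbf H_2=U\sum_i a^\dag_{i,\uparrow}a_{i,\uparrow}a^\dag_{i,\downarrow}a_{i,\downarrow}$ penalizes double occupancy and $\mathbf H_1$ collects the hopping and field terms, the singly-occupied half-filled configurations form the ground space $\mathcal H^\parallel$ of $\mathbf H_2$, separated by a gap of order $U$. Since the effective coupling appears only at second order, the first-order projection lemma (Lemma \ref{proj}) does not by itself produce it; instead I would invoke a rigorous Schrieffer--Wolff (self-energy) expansion, of the type underlying perturbation-gadget constructions, certifying that the spectrum of $\mathbf H_1+\mathbf H_2$ restricted to the low-energy sector equals that of $J\,H'$ up to an additive error bounded by the higher-order corrections.

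The main obstacle is precisely this precision analysis. The leading uncontrolled contribution is the fourth-order superexchange, of absolute size $\sim t^4/U^3$ per bond, i.e.\ a relative error $\sim t^2/U^2$ compared to $J$; summed over the $O(n^2)$ bonds this must be driven below the (rescaled) inverse-polynomial promise gap, which forces $U=\mathrm{poly}(n)$ to be taken sufficiently large and requires the gadget error bounds to compose additively rather than multiplicatively across bonds---the delicate point of the argument. A secondary subtlety, already flagged above, is enforcing half-filling: since the stated Hamiltonian carries no chemical potential, I would either restrict to the conserved half-filling particle-number sector or add a chemical-potential penalty favoring one electron per site, and then verify that the global ground state indeed lies in this sector so that the superexchange picture applies.
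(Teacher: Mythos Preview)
The paper does not prove this lemma at all: it is quoted verbatim as a result of Schuch and Verstraete \cite{SV09}, with no argument supplied beyond the citation. So there is no ``paper's own proof'' to compare against; your proposal is instead a reconstruction of the original \cite{SV09} argument.

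As such a reconstruction, your outline is accurate in spirit. Schuch and Verstraete do obtain QMA-hardness of the 2D Hubbard model by reducing from the 2D Heisenberg model with local fields (their own Lemma~\ref{pro0} here), using the large-$U$ half-filling limit so that the low-energy sector realizes antiferromagnetic superexchange $J\sim t^2/U$, and they control the effective Hamiltonian with rigorous perturbation-gadget bounds rather than the first-order projection lemma, exactly as you anticipated. Your identification of the two delicate points---the need for a second-order (Schrieffer--Wolff/self-energy) expansion with additive error control across $O(n^2)$ bonds, and the issue of pinning the ground state to the half-filled singly-occupied sector---is on target; in \cite{SV09} these are handled via the perturbative-gadget machinery of Kempe--Kitaev--Regev/Oliveira--Terhal and by working in the fixed particle-number sector. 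If you want your write-up to stand on its own you would need to actually invoke one of those gadget theorems with explicit polynomial choices for $U$; as a sketch, however, it faithfully mirrors the cited proof.
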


\begin{theorem} \label{pro2}
On a 3D cubic lattice of size $n\times n\times n$, the Hubbard model at half filling
\begin{equation}
H=-t\sum_{\langle i,j\rangle,s}c^\dag_{i,s}c_{j,s}+U\sum_ic^\dag_{i,\uparrow}c_{i,\uparrow}c^\dag_{i,\downarrow}c_{i,\downarrow}-\sum_i\vec{h}_i\cdot\vec{\sigma}_i
\end{equation}
with $0<t,U,\max_i|\vec{h}_i|\le\poly(n)$ and the constraint that the ground state obeys area laws for $S_\alpha(\alpha\ge1)$ is \QMA-complete.
\end{theorem}

\begin{proof}
Using Lemma \ref{ll} instead of Lemma \ref{pro0}, Theorem \ref{pro2} can be proved in the same way as Theorem \ref{pro1}.
\end{proof}

\section*{Declaration of competing interest}

The author declares that he has no known competing financial interests or personal relationships that could have appeared to influence the work reported in this paper.

\section*{Acknowledgments}

This work was supported by the DARPA OLE program.

\printbibliography

\end{document}